\tikzstyle{printersafe}=[snake=snake,segment amplitude=0 pt]
\newtheorem{proposition}{\em Proposition}
\newtheorem{theorem}{\em Theorem}
\newtheorem{conjecture}{\em Conjecture}
\newtheorem{definition}{\em Definition}
\newtheorem{lemma}{\em Lemma}
\newtheorem{remark}{\em Remark}
\newtheorem{corollary}{\em Corollary}
\newtheorem{claim}{\em Claim}
\journal{Sample Journal}
\begin{document}

\begin{frontmatter}

\title{On maximum $k$-edge-colorable subgraphs of bipartite graphs}

\author[label1]{Liana Karapetyan}
\address[label1]{Department of Informatics and Applied Mathematics, Yerevan State University, Yerevan, Armenia}


\ead{lianak0506@gmail.com}

\author[label5]{Vahan Mkrtchyan}
\address[label5]{Dipartimento di Informatica, University of Verona, Verona, Italy}
\ead{vahanmkrtchyan2002@ysu.am}


\begin{abstract}
If $k\geq 0$, then a $k$-edge-coloring of a graph $G$ is an assignment of colors to edges of $G$ from the set of $k$ colors, so that adjacent edges receive different colors. A $k$-edge-colorable subgraph of $G$ is maximum if it is the largest among all $k$-edge-colorable subgraphs of $G$. For a graph $G$ and $k\geq 0$, let $\nu_{k}(G)$ be the number of edges of a maximum $k$-edge-colorable subgraph of $G$. In 2010 Mkrtchyan et al. proved that if $G$ is a cubic graph, then $\nu_2(G)\leq \frac{|V|+2\nu_3(G)}{4}$. This result implies that if the cubic graph $G$ contains a perfect matching, in particular when it is bridgeless, then $\nu_2(G)\leq \frac{\nu_1(G)+\nu_3(G)}{2}$. One may wonder whether there are other interesting graph-classes, where a relation between $\nu_2(G)$ and $\frac{\nu_1(G)+\nu_3(G)}{2}$ can be proved. Related with this question, in this paper we show that $\nu_{k}(G) \geq \frac{\nu_{k-i}(G) + \nu_{k+i}(G)}{2}$ for any bipartite graph $G$, $k\geq 0$ and $i=0,1,...,k$.
\end{abstract}

\begin{keyword}
Edge-coloring; bipartite graph; $k$-edge-colorable subgraph; maximum $k$-edge-colorable subgraph.
\end{keyword}

\end{frontmatter}



\section{Introduction}

In this paper graphs are assumed to be finite, undirected and without loops, though they may contain multiple edges. The set of vertices and edges of a graph $G$ is denoted by $V(G)$ and $E(G)$, respectively. The degree of a vertex $u$ of $G$ is denoted by $d_{G}(u)$. Let $\Delta(G)$ and $\delta(G)$ be the maximum and minimum degree of a vertex of $G$. A graph $G$ is regular, if $\delta(G)=\Delta(G)$. The girth of the graph is the length of the shortest cycle in its underlying simple graph.

A {bipartite graph} is a graph whose vertices can be divided into two disjoint sets $U$ and $W$, such that every edge connects a vertex in $U$ to one in $W$. A graph is nearly bipartite, if it contains a vertex, whose removal results into a bipartite graph.

A {matching} in a graph $G$ is a subset of edges such that no vertex of $G$ is incident to two edges from the subset. A {maximum matching} is a matching that contains the largest possible number of edges.


If $k\geq 0$, then a graph $G$ is called {$k$-edge colorable}, if its edges can be assigned colors from a set of $k$ colors so that adjacent
edges receive different colors. The smallest integer $k$, such that $G$ is $k$-edge-colorable is called chromatic index of $G$ and is denoted by $\chi'(G)$. The classical theorem of Shannon states that for any graph $G$ $\Delta(G)\leq \chi'(G) \leq \left \lfloor \frac{3\Delta(G)}{2} \right \rfloor$ \cite{Shannon:1949,stiebitz:2012}. On the other hand, the classical theorem of Vizing states that for any graph $G$ $\Delta(G)\leq \chi'(G) \leq \Delta(G)+\mu(G)$ \cite{stiebitz:2012,vizing:1964}. Here $\mu(G)$ is the maximum multiplicity of an edge of $G$. A graph is class I if $\chi'(G)=\Delta(G)$, otherwise it is class II.

If the edges of $G$ are colored, then for a color $\alpha$ let $E_{\alpha}$ be the set of edges of $G$ that are colored with $\alpha$. Observe that $E_{\alpha}$ is a matching. We say that a vertex $v$ is incident to the color $\alpha$, if $v$ is incident to an edge from $E_{\alpha}$. If $v$ is not incident to the color $\alpha$, then we say that $v$ misses the color $\alpha$. Now, if we have two different colors $\alpha$ and $\beta$, then consider the subgraph of $G$ induced by $E_{\alpha}\cup E_{\beta}$. Observe that the components of this subgraph are paths or even cycles. The components which are paths are usually called $\alpha-\beta$-alternating paths or Kempe chains \cite{stiebitz:2012}. If $P$ is an $\alpha-\beta$-alternating path connecting the vertices $u$ and $v$, then we can exchange the colors on $P$ and obtain a new edge-coloring of $G$. Observe that if $u$ is incident to the color $\alpha$ in the former edge-coloring, then in the new one it will miss the color $\alpha$.

If $k<\chi'(G)$, we cannot color all edges of $G$ with $k$ colors. Thus it is reasonable to investigate the maximum number of edges that one can color with $k$ colors. A subgraph $H$ of a graph $G$ is called {maximum $k$-edge-colorable}, if $H$ is $k$-edge-colorable and contains maximum number of edges among all $k$-edge-colorable subgraphs. For $k\geq 0$ and a graph $G$ let
\[\nu_{k}(G) = \max \{ |E(H)| : H \text{ is a $k$-edge-colorable subgraph of } G \}. \]
Clearly, a $k$-edge-colorable subgraph is maximum if it contains exactly $\nu_k(G)$ edges.

There are several papers where the ratio $\frac{|E(H_k)|}{|E(G)|}$ has been investigated. Here $H_k$ is a maximum $k$-edge-colorable subgraph of $G$. \cite{bollobas:1978,henning:2007,nishizeki:1981,nishizeki:1979,weinstein:1974} prove lower bounds for the ratio when the graph is regular and $k=1$. For regular graphs of high girth the bounds are improved in \cite{flaxman:2007}. Albertson and Haas have investigated the problem in \cite{haas:1996,haas:1997} when $G$ is a cubic graph. See also \cite{samvel:2010}, where the authors proved that for every cubic graph $G$ $\nu _{2}(G) \geq \frac{4}{5}|V(G)|$ and $\nu _{3}(G) \geq  \frac{7}{6} |V(G)|$. Moreover, \cite{samvel:2014} shows that for any cubic graph $G$ $\nu _{2}(G) + \nu _{3}(G) \geq 2|V(G)|$.

    
    Bridgeless cubic graphs that are not $3$-edge-colorable are usually called snarks \cite{cavi:1998}, and the problem for snarks is investigated by Steffen in \cite{steffen:1998,steffen:2004}. This lower bound has also been investigated in the case when the graphs need not be cubic in \cite{miXumbFranciaciq:2013,Kaminski:2014,Rizzi:2009}. Kosovski and Rizzi have investigated the problem from the algorithmic perspective \cite{Kosovski:2009,Rizzi:2009}. Since the problem of constructing a $k$-edge-colorable graph in an input graph is NP-complete for each fixed $k\geq 2$, it is natural to investigate the (polynomial) approximability of the problem. In \cite{Kosovski:2009}, for each $k\geq 2$ an algorithm for the problem is presented. There for each fixed value of $k \geq 2$, algorithms are proved to have certain approximation ratios and they are tending to $1$ as $k$ tends to infinity.

Some structural properties of maximum $k$-edge-colorable subgraphs of graphs are proved in \cite{samvel:2014,MkSteffen:2012}. In particular, there it is shown that every set of disjoint cycles of a graph with $\Delta=\Delta(G) \geq 3$ can be extended to a maximum $\Delta$-edge colorable subgraph.  Also there it is shown that a maximum $\Delta$-edge colorable subgraph of a simple graph is always class I. Finally, if $G$ is a graph with girth $g \in \left \{ 2k, 2k+1 \right \} (k \geq 1)$ and $H$ is a maximum $\Delta$-edge colorable subgraph of $G$, then $\frac{|E(H)|}{|E(G)|} \geq \frac{2k}{2k+1}$ and the bound is best possible is a sense that there is an example attaining it.

In \cite{samvel:2010} Mkrtchyan et al. proved that for any cubic graph $G$ $\nu_{2}(G) \leq \frac{|V(G)| + 2\nu_{3}(G)}{4}$. For bridgeless cubic graphs, which by Petersen theorem have a perfect matching, this inequality becomes, $\nu _{2}(G)\leq \frac{\nu _{1}(G)+\nu _{3}(G)}{2}$. One may wonder whether there are other interesting graph-classes, where a relation between $\nu_2(G)$ and $\frac{\nu_1(G)+\nu_3(G)}{2}$ can be proved. In \cite{lianna:2017}, the following conjecture is stated:

\begin{conjecture}\label{conj:nearlyBip} (\cite{lianna:2017}) For each $k\geq 1$ and a nearly bipartite graph $G$
\begin{equation*}
    \nu _{k}(G) \geq \left \lfloor \frac{\nu_{k-1}(G)+\nu_{k+1}(G)}{2} \right \rfloor.
    \end{equation*} 
\end{conjecture}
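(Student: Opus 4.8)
The plan is to reduce Conjecture~\ref{conj:nearlyBip} to the cleaner-looking inequality $2\nu_k(G) \ge \nu_{k-1}(G) + \nu_{k+1}(G) - 1$. This suffices: since $\nu_k(G)$ is an integer and $\lceil (n-1)/2 \rceil = \lfloor n/2 \rfloor$ for every integer $n$, dividing by $2$ and rounding up converts the displayed inequality into exactly $\nu_k(G) \ge \lfloor (\nu_{k-1}(G)+\nu_{k+1}(G))/2 \rfloor$. The appearance of the floor, rather than the exact half-sum that the bipartite theorem of this paper yields, is precisely the $-1$ defect that I expect the single non-bipartite vertex to force.

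To prove the reduced inequality I would run the bipartite argument while localizing the failure of bipartiteness. Fix a vertex $v$ with $G-v$ bipartite, with parts $X$ and $Y$. Let $A$ be a maximum $(k-1)$-edge-colorable subgraph and $B$ a maximum $(k+1)$-edge-colorable subgraph, and form the multiset union $F=A\uplus B$, in which each edge $e$ carries multiplicity $\mathbbm{1}[e\in A]+\mathbbm{1}[e\in B]\in\{0,1,2\}$. Then $\Delta(F)\le (k-1)+(k+1)=2k$, the maximum multiplicity of $F$ is at most $2$, and $F-v$ is bipartite. The goal becomes a purely combinatorial coloring statement: \emph{every such nearly bipartite multigraph with $\Delta\le 2k$ has a $2k$-edge-colorable submultigraph missing at most one edge.} Granting this, partition the $2k$ colors into two classes of size $k$; this splits the colored edges into two $k$-edge-colorable submultigraphs whose sizes sum to at least $|A|+|B|-1$, so the larger one witnesses $\nu_k(G)\ge \lceil(\nu_{k-1}(G)+\nu_{k+1}(G)-1)/2\rceil$. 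The multiplicity-two edges require a little care so that each part remains an honest subgraph of $G$, but this is handled exactly as in the bipartite theorem.

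For the coloring statement I would split $v$ into two vertices: $v_X$ on the $X$-side carrying the edges from $v$ to $Y$, and $v_Y$ on the $Y$-side carrying the edges from $v$ to $X$. The resulting multigraph is bipartite with maximum degree at most $2k$, so by K\"onig's edge-coloring theorem it has a proper $2k$-edge-coloring. Merging $v_X$ and $v_Y$ back into $v$ reinstates every edge but may create \emph{collisions} at $v$, namely colors used simultaneously by a $v_X$-edge and a $v_Y$-edge. Because $F-v$ is bipartite, any two color classes meet $F-v$ in a disjoint union of paths and even cycles, so Kempe ($\alpha$--$\beta$ alternating) recoloring along these chains is entirely safe and can be used to free colors at the neighbors of $v$ and to resolve collisions one at a time.

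The main obstacle will be bounding the number of collisions that \emph{cannot} be removed by such rerouting. Each stubborn collision corresponds to an odd closed walk through $v$ along which the two colors alternate with the wrong parity, exactly the obstruction absent in the bipartite case; the heart of the matter is to show these obstructions always amalgamate so that at most one edge need ultimately be discarded, which is the single unit of slack encoded by the floor. Making the parity bookkeeping along chains through $v$ rigorous, and in particular ruling out a configuration that forces two or more irreducible collisions when $\deg_F(v)$ saturates $2k$, is where I expect the genuine difficulty to lie.
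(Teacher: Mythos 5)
You are attacking Conjecture~\ref{conj:nearlyBip}, which the paper does \emph{not} prove: it is stated as an open conjecture from \cite{lianna:2017} (verified there only for graphs with at most one cycle), and the paper itself establishes only the bipartite analogue without the floor (Theorem~\ref{thm:Bipk=k}, Corollary~\ref{cor:ArithMeanCorollary}). Your proposal is likewise not a proof. The opening reduction is fine: since $\left\lceil (n-1)/2 \right\rceil = \left\lfloor n/2 \right\rfloor$ for integers $n$, it suffices to show $2\nu_k(G) \geq \nu_{k-1}(G)+\nu_{k+1}(G)-1$, and the split-vertex/K\"onig setup on $F = A \uplus B$ is a sensible way to localize the non-bipartiteness. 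But the entire content of the conjecture is then packed into the italicized coloring claim --- that $F$ admits a $2k$-edge-coloring discarding at most one edge, with the further property that the colors split into two $k$-classes usable as subgraphs of $G$ --- and you explicitly leave that claim unproven (``where I expect the genuine difficulty to lie''). After merging $v_X$ and $v_Y$, up to $k$ color collisions can occur at $v$, and no mechanism is given by which the irreducible ones ``amalgamate'' into a single discarded edge; this is not bookkeeping to be made rigorous but the open problem itself. Note also that Kempe rerouting is not ``entirely safe'' here: an $\alpha$--$\beta$ chain can pass through $v$ and close into an odd cycle, which is precisely the failure mode the paper exhibits in Figure~\ref{fig:AugNonBipExample}, where the bipartite augmenting-path tool (Lemma~\ref{lem:BergeAugmPathLem}) breaks for a single odd structure.

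A second concrete defect: ``handled exactly as in the bipartite theorem'' misreads the paper. The proof of Theorem~\ref{thm:Bipk=k} is a minimal-counterexample argument (edge and vertex deletion, a Vizing-style fan, Lemma~\ref{lem:BergeAugmPathLem}), not a union-and-split argument, so there is no multiplicity handling there to inherit. In a proper $2k$-edge-coloring the two parallel copies of an edge of $A \cap B$ receive distinct colors, but nothing forces those two colors into \emph{different} $k$-classes; every doubled edge whose colors land in one class costs an additional lost edge when you pass to honest subgraphs of $G$, and your total slack is exactly one edge. In the purely bipartite setting this is repairable (e.g., place one copy of $A \cap B$ into each half in advance and split the remainder along an Eulerian orientation), but in the nearly bipartite setting that repair interacts with the collisions at $v$, which your sketch does not control. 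In short: correct reduction, reasonable program, but the decisive lemma is missing and appears to be essentially equivalent in strength to the conjecture being proved.
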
 In the same paper the bipartite analogue of this conjecture is stated, which says that for bipartite graphs the statement of the Conjecture \ref{conj:nearlyBip} holds without the sign of floor. Note that \cite{lianna:2017} verifies Conjecture \ref{conj:nearlyBip} and its bipartite analogue when $G$ contains at most one cycle.

The present paper is organized as follows: In Section \ref{sec:aux}, some auxiliary results are stated. Section \ref{sec:main} proves the main result of the paper, which states that for any bipartite graph $G$, $k\geq 0$ $\nu_{k}(G) \geq \frac{\nu_{k-i}(G) + \nu_{k+i}(G)}{2}$, where $i=0,1,...,k$. Section \ref{sec:future} discusses the future work.

Terms and concepts that we do not define, can be found in \cite{west:1996}.

\section{Auxiliary results}
\label{sec:aux}

In this section, we present some auxiliary results that will be useful later. The first two of them are simple consequences of a classical theorem due to K\"{o}nig \cite{stiebitz:2012,west:1996}, which states for any bipartite graph $G$, we have $\chi'(G)=\Delta(G)$.

\begin{proposition}
\label{prop:Deltakprop} Let $G$ be a bipartite graph and let $k\geq 0$. Then a subgraph $F$ of $G$ is $k$-edge-colorable, if and only if $\Delta(F)\leq k$.
\end{proposition}

\begin{proposition}
\label{prop:regbip} Let $k\geq 0$ and let $G$ be a $k$-regular bipartite graph. Then for $i=0,1,...,k$ we have $\nu_i(G)=i\cdot \frac{|V(G)|}{2}$.
\end{proposition}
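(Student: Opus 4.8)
The plan is to combine König's theorem with the characterization of $i$-edge-colorable subgraphs furnished by Proposition~\ref{prop:Deltakprop}, and then to establish the two inequalities $\nu_i(G) \geq i \cdot \frac{|V(G)|}{2}$ and $\nu_i(G) \leq i \cdot \frac{|V(G)|}{2}$ separately.

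First I would prove the structural fact that $E(G)$ decomposes into $k$ perfect matchings. Since $G$ is bipartite, König's theorem gives $\chi'(G) = \Delta(G) = k$, so $G$ admits a proper $k$-edge-coloring with color classes $M_1, \ldots, M_k$. Because $G$ is $k$-regular, every vertex is incident to exactly $k$ edges carrying $k$ distinct colors, hence to precisely one edge of each color; thus each $M_j$ is a perfect matching of $G$ and contributes $\frac{|V(G)|}{2}$ edges.

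For the lower bound, I would take the union $F = M_1 \cup \cdots \cup M_i$ of any $i$ of these matchings. As the matchings are edge-disjoint and each vertex meets exactly one edge of each, $F$ is a subgraph with $\Delta(F) = i$ and $|E(F)| = i \cdot \frac{|V(G)|}{2}$. By Proposition~\ref{prop:Deltakprop}, $F$ is $i$-edge-colorable, whence $\nu_i(G) \geq i \cdot \frac{|V(G)|}{2}$.

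For the upper bound, let $F$ be any $i$-edge-colorable subgraph of $G$. Proposition~\ref{prop:Deltakprop} gives $\Delta(F) \leq i$, so a handshaking count yields $|E(F)| = \frac{1}{2}\sum_{v} d_F(v) \leq \frac{1}{2}\cdot i \cdot |V(G)| = i \cdot \frac{|V(G)|}{2}$. Applying this to a maximum $i$-edge-colorable subgraph gives $\nu_i(G) \leq i \cdot \frac{|V(G)|}{2}$, and combining the two bounds finishes the argument. I do not expect a serious obstacle here; the only step demanding care is verifying that each color class is genuinely a \emph{perfect} matching, which hinges on the $k$-regularity of $G$ together with the equality $\chi'(G) = k$.
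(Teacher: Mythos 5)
Your proof is correct and follows exactly the route the paper intends: the paper states Proposition~\ref{prop:regbip} without proof as a ``simple consequence'' of K\"{o}nig's theorem $\chi'(G)=\Delta(G)$, and your argument—decomposing the $k$-regular bipartite graph into $k$ perfect matchings, taking a union of $i$ of them for the lower bound, and using $\Delta(F)\leq i$ plus handshaking for the upper bound—is precisely the fleshed-out version of that consequence. No gaps; the care you take in noting that regularity forces each color class to be a perfect matching is exactly the right point to check.
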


Our next auxiliary result follows from an observation that a vertex can be incident to at most $k$ edges in a $k$-edge-colorable graph.

\begin{proposition}
\label{prop:vertexremoval} If $G$ is a graph, $v$ is a vertex of $G$ and $k\geq 0$. Then \[\nu_k(G)\leq \nu_k(G-v)+k.\]
\end{proposition}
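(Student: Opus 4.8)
The plan is to start from a maximum $k$-edge-colorable subgraph of $G$ and delete the vertex $v$ from it, thereby producing a $k$-edge-colorable subgraph of $G-v$ that has lost only a controlled number of edges. Concretely, let $H$ be a maximum $k$-edge-colorable subgraph of $G$, so that $|E(H)|=\nu_k(G)$ and $H$ admits a proper edge-coloring with $k$ colors.

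The key observation, as flagged in the remark preceding the statement, is that in any $k$-edge-colorable graph every vertex is incident to at most $k$ edges: the edges meeting at a common vertex are pairwise adjacent, hence must receive pairwise distinct colors, and only $k$ colors are available. Applying this to $H$ shows that the vertex $v$ satisfies $d_H(v)\leq k$.

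Next I would delete $v$ from $H$. The resulting graph $H-v$ is a subgraph of $G-v$, and it inherits the edge-coloring of $H$ restricted to its remaining edges, so it is again $k$-edge-colorable. Since exactly $d_H(v)$ edges are removed, we have $|E(H-v)|=|E(H)|-d_H(v)\geq \nu_k(G)-k$. On the other hand $H-v$ is a $k$-edge-colorable subgraph of $G-v$, so by the definition of $\nu_k$ we get $|E(H-v)|\leq \nu_k(G-v)$. Chaining these two inequalities yields $\nu_k(G)-k\leq \nu_k(G-v)$, which is precisely the claimed bound.

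There is essentially no obstacle here; the only point to verify is that deleting a vertex from a $k$-edge-colorable graph cannot destroy $k$-edge-colorability, which is immediate since the restriction of a proper coloring to a subgraph remains proper. I would also remark that this argument nowhere uses bipartiteness, so the proposition indeed holds for arbitrary graphs, consistent with how it is stated.
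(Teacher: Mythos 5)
Your proof is correct and follows exactly the route the paper intends: the paper gives no detailed proof, noting only that the proposition ``follows from an observation that a vertex can be incident to at most $k$ edges in a $k$-edge-colorable graph,'' which is precisely your key step of bounding $d_H(v)\leq k$ in a maximum $k$-edge-colorable subgraph $H$ and passing to $H-v$.
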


The next result states that if one is removing an edge from a graph, then $\nu_k(G)$ can decrease by at most one.

\begin{proposition}
\label{prop:edgeremoval} If $G$ is a graph, $e$ is an edge of $G$ and $k\geq 0$. Then \[\nu_k(G-e)\leq \nu_k(G)\leq \nu_k(G-e)+1.\]
\end{proposition}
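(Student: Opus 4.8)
The plan is to prove the two inequalities separately, each by exhibiting a single witnessing subgraph and invoking nothing more than the definition of $\nu_k$.

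For the left inequality $\nu_k(G-e)\le\nu_k(G)$, I would start from the trivial observation that $G-e$ is itself a subgraph of $G$. Hence any $k$-edge-colorable subgraph $H$ of $G-e$ is simultaneously a $k$-edge-colorable subgraph of $G$, so $|E(H)|\le\nu_k(G)$. Choosing $H$ to be a maximum $k$-edge-colorable subgraph of $G-e$ then gives $\nu_k(G-e)=|E(H)|\le\nu_k(G)$. This amounts to the fact that $\nu_k$ is monotone with respect to the subgraph relation, which is immediate.

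For the right inequality $\nu_k(G)\le\nu_k(G-e)+1$, I would let $H$ be a maximum $k$-edge-colorable subgraph of $G$, so that $|E(H)|=\nu_k(G)$, and then push $H$ into $G-e$ by discarding $e$ if it happens to lie in $H$. Concretely I would split into two cases. If $e\notin E(H)$, then $H$ is already a $k$-edge-colorable subgraph of $G-e$, whence $\nu_k(G-e)\ge|E(H)|=\nu_k(G)$ and the bound holds with room to spare. If $e\in E(H)$, then $H-e$ is a $k$-edge-colorable subgraph of $G-e$ (deleting an edge cannot increase any degree nor invalidate a proper coloring) with exactly $|E(H)|-1$ edges, so $\nu_k(G-e)\ge|E(H)|-1=\nu_k(G)-1$, which rearranges to the desired inequality.

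I do not expect a genuine obstacle here: both directions reduce to choosing the right witnessing subgraph together with monotonicity of $\nu_k$. The only point requiring a moment's care is the case distinction in the second inequality, namely that a maximum $k$-edge-colorable subgraph of $G$ need not contain $e$; treating both possibilities is what guarantees that the loss incurred by passing to $G-e$ is at most one edge. Everything else follows directly from the definition of $\nu_k$.
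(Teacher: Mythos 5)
Your proof is correct: both inequalities follow from monotonicity of $\nu_k$ and the case analysis on whether $e$ lies in a maximum $k$-edge-colorable subgraph of $G$, and deleting an edge indeed preserves $k$-edge-colorability. The paper states this proposition without proof, treating it as immediate, and your argument is exactly the standard one it implicitly relies on.
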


In order to prove our next auxiliary result, we will use alternating paths.

\begin{lemma}
\label{lem:RemoveG-eHkdegrees} Let $G$ be a bipartite graph, $e=uv$ be an edge of $G$, and $j\geq 0$. Then for any maximum $j$-edge-colorable subgraph $H_j$ with $e\notin E(H_j)$, we have $d_{H_j}(u)=j$ or $d_{H_j}(v)=j$.
\end{lemma}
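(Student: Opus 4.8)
The plan is to argue by contradiction, assuming that \emph{both} $d_{H_j}(u)<j$ and $d_{H_j}(v)<j$. Recall that $d_{H_j}(u),d_{H_j}(v)\le j$ hold automatically, since every vertex of a $j$-edge-colorable graph is incident to at most $j$ edges; hence the negation of the conclusion is exactly this strict inequality at both endpoints. From this assumption I would construct a $j$-edge-colorable subgraph of $G$ with one more edge than $H_j$, contradicting the maximality of $H_j$.

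The quickest route uses Proposition \ref{prop:Deltakprop} directly. Form the subgraph $H_j+e$ obtained by adding the edge $e=uv$ to $H_j$. Adding $e$ raises only the degrees of $u$ and $v$, each by one, so $d_{H_j+e}(u)=d_{H_j}(u)+1\le j$ and $d_{H_j+e}(v)=d_{H_j}(v)+1\le j$, while all other degrees are unchanged; hence $\Delta(H_j+e)\le j$. Since $H_j+e$ is a subgraph of the bipartite graph $G$, Proposition \ref{prop:Deltakprop} guarantees that it is $j$-edge-colorable. But $|E(H_j+e)|=\nu_j(G)+1$, contradicting the fact that $\nu_j(G)$ is the maximum size of a $j$-edge-colorable subgraph. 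This already proves the lemma.

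Since the preceding remark announces an argument through alternating paths, I would also record the more hands-on version, which avoids invoking Proposition \ref{prop:Deltakprop} and instead re-proves the needed colorability by a Kempe chain. Fix a $j$-edge-coloring of $H_j$. Under the contradiction hypothesis $u$ misses some color and $v$ misses some color. If they miss a common color $\alpha$, then coloring $e$ with $\alpha$ extends the coloring to $H_j+e$, a contradiction. Otherwise let $u$ miss $\alpha$ and $v$ miss $\beta$ with $\alpha\neq\beta$; note that then $u$ must be incident to $\beta$ and $v$ incident to $\alpha$, for otherwise a common missed color reappears. I would then take the maximal $\alpha-\beta$ alternating path $P$ starting at $u$ (its first edge has color $\beta$) and recolor along it, after which $u$ misses $\beta$; coloring $e$ with $\beta$ finishes, provided $v$ is untouched by $P$.

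The hard part of this second route --- and the only place where bipartiteness is genuinely used --- is showing that $v$ does not lie on $P$. I expect to handle this by a parity argument: $v$ cannot be an interior vertex of $P$ because it has no $\beta$-edge, so it could only be the far endpoint, reached along an $\alpha$-edge; but then $P$ would be a $u$--$v$ path beginning with a $\beta$-edge and ending with an $\alpha$-edge, hence of even length, which is impossible in a bipartite graph, where $u$ and $v$ lie in opposite parts and every $u$--$v$ path has odd length. With $v\notin V(P)$ established, $v$ still misses $\beta$ after the swap, and the contradiction goes through. The first route is cleaner, so I would present it as the proof and relegate the Kempe-chain version to a remark if the alternating-path technique is wanted for later sections.
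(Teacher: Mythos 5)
Your proposal is correct, and your primary argument takes a genuinely different (and shorter) route than the paper. The paper proves the lemma with a Kempe-chain argument: it picks a color $\alpha$ missing at $u$ and a color $\beta$ missing at $v$, notes that each must be present at the other endpoint by maximality, observes that the $\alpha-\beta$ alternating paths starting at $u$ and at $v$ cannot be the same path (else, together with $e$, one gets an odd cycle in a bipartite graph), and then exchanges colors on one of them so that $e$ can be colored, contradicting maximality. Your first route bypasses all of this by invoking Proposition \ref{prop:Deltakprop}: if $d_{H_j}(u)\leq j-1$ and $d_{H_j}(v)\leq j-1$, then $\Delta(H_j+e)\leq j$, so $H_j+e$ is $j$-edge-colorable and has $\nu_j(G)+1$ edges, a contradiction. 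This is logically sound and non-circular --- Proposition \ref{prop:Deltakprop} is stated before the lemma, is a consequence of K\"onig's theorem rather than of this lemma, and the paper itself uses exactly this maneuver inside the proof of Lemma \ref{lem:BergeAugmPathLem} --- and it is cleaner, at the cost of concealing where bipartiteness enters (it is buried inside K\"onig's theorem rather than exhibited via an odd-cycle contradiction). Your secondary, Kempe-chain route is essentially the paper's own proof; in fact your parity argument (an even $u$--$v$ path is impossible because $u$ and $v$ are adjacent, hence in opposite parts) makes explicit what the paper compresses into the remark that the two alternating paths coinciding would create an odd cycle. Both versions you give are correct, and presenting the Proposition-based argument as the main proof is a legitimate simplification.
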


\begin{proof} Assume that there is a maximum $j$-edge-colorable subgraph $H_j$ that does not contain $e$ and with $d_{H_j}(u)\leq j-1$ and $d_{H_j}(v)\leq j-1$. Then there are colors $\alpha$ and $\beta$ of $H_j$ such that $\alpha$ misses at $u$ and $\beta$ misses at $v$. Clearly, $\alpha$ must be present at $v$ and $\beta$ must be present at $u$, since $H_j$ is maximum $j$-edge-colorable. Consider the $\alpha-\beta$ alternating paths starting at $u$ and $v$. If they are the same, then we get an odd cycle contradicting the fact that $G$ is bipartite. Hence they are different. Exchange the colors $\alpha$ and $\beta$ on one of them and color $e$. Observe that we have got a $j$-edge-colorable subgraph of $G$ with $|E(H_j)|+1$ edges contradicting the maximality of $H_j$. Thus the statement of the lemma should be true.
\end{proof}

If $M$ is a matching in a graph $G$, then a simple odd path $P$ is said to be $M$-augmenting, if the odd edges of $P$ lie outside $M$, the even edges of $P$ belong to $M$, and the end-points of $P$ are not covered by $M$. It is easy to see that if $G$ contains an $M$-augmenting path, then $M$ is not a maximum matching in $G$. The classical theorem of Berge \cite{berge:1973}, states that if $M$ is not a maximum matching in $G$, then $G$ must contain an $M$-augmenting path. In the end of this section, we prove the analogue of this result for $k$-edge-colorable subgraphs of bipartite graphs. It is quite plausible that our result can be derived using the general result about maximality of so-called $c$-matchings (Theorem 2 of Section 8, page 152 of \cite{berge:1973}), however, here we will give a direct proof that works only for bipartite graphs.

We will require some definitions. For a positive integer $k\geq 1$, bipartite graph $G$ and a $k$-edge-colorable subgraph $A_k$ of $G$ define an $A_k$-augmenting path as follows.

\begin{definition} A simple $u-v$-path $P$ is $A_k$-augmenting, if it is of odd length, the even edges of $P$ belong to $A_k$, the odd edges lie outside $A_k$ and $d_{A_k}(u)\leq k-1$, $d_{A_k}(v)\leq k-1$.
\end{definition}

Observe that if $G$ contains an $A_k$-augmenting path $P$, then $|E(A_k)|<\nu_k(G)$. In order to see this, consider a subgraph $B_k$ of $G$ obtained from $A_k$ by removing the even edges of $P$ from $A_k$ and adding the odd edges. Observe that any vertex $w$ of $G$ has degree at most $k$ in $B_k$, hence $B_k$ is $k$-edge-colorable by Proposition \ref{prop:Deltakprop}. Moreover, $|E(B_k)|=|E(A_k)|+1$. 
 
The following lemma states that the converse is also true.

\begin{lemma}
\label{lem:BergeAugmPathLem} Let $G$ be a bipartite graph, $k\geq 1$ and let $A_k$ be a $k$-edge-colorable subgraph with $|E(A_k)|<\nu_k(G)$. Then $G$ contains an $A_k$-augmenting path.
\end{lemma}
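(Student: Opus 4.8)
The plan is to mimic the proof of Berge's classical augmenting-path theorem, using the fact that the symmetric difference of two subgraphs has small maximum degree. Let $B_k$ be a maximum $k$-edge-colorable subgraph of $G$, so $|E(B_k)| = \nu_k(G) > |E(A_k)|$. Consider the subgraph $D$ of $G$ whose edge set is the symmetric difference $E(A_k) \triangle E(B_k)$, and decompose it into connected components.

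The key structural step is to bound the degrees in $D$. At each vertex $w$, the edges of $D$ split into those belonging to $A_k$ and those belonging to $B_k$; since $A_k$ and $B_k$ are each $k$-edge-colorable, by Proposition \ref{prop:Deltakprop} we have $d_{A_k}(w) \le k$ and $d_{B_k}(w) \le k$, so $w$ is incident to at most $k$ edges of $D$ coming from each of the two subgraphs. The components of $D$ are therefore not simply paths and cycles as in the matching case, which is the main obstacle: a vertex can have degree up to $2k$ in $D$. To handle this I would not work with the whole symmetric difference at once. Instead, I would first reduce to the case where $A_k$ and $B_k$ differ in a controlled way, or else weight/count edges within $D$ by color classes.

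The cleaner route I would pursue is to decompose each of $A_k$ and $B_k$ into $k$ matchings (possible by König's theorem, i.e. Proposition \ref{prop:Deltakprop}), say $A_k = M_1 \cup \dots \cup M_k$ and $B_k = N_1 \cup \dots \cup N_k$. Since $|E(B_k)| > |E(A_k)|$, some pairing or averaging argument forces an index where $B_k$ locally exceeds $A_k$. More directly, because $|E(B_k)| > |E(A_k)|$, the component-by-component count of $D$ must contain a component $C$ with more $B_k$-edges than $A_k$-edges. Within such a component I want to extract a single alternating path starting and ending at vertices that are ``deficient'' for $A_k$, meaning of $A_k$-degree strictly less than $k$. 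The parity and counting show that a component with strictly more $B_k$-edges than $A_k$-edges must contain a vertex where $B_k$-degree exceeds $A_k$-degree; following a maximal alternating trail (alternating between $B_k$-edges outside $A_k$ and $A_k$-edges) from such a vertex, the bipartiteness of $G$ rules out odd closed walks, and the trail terminates at another vertex deficient for $A_k$. This yields a simple odd path $P$ whose even edges lie in $A_k$ and whose odd edges lie outside $A_k$, with both endpoints of $A_k$-degree at most $k-1$, which is exactly an $A_k$-augmenting path.

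The hard part will be justifying that one can extract a \emph{simple} alternating path (not merely an alternating walk or trail) with both endpoints deficient, since high degrees in $D$ mean the naive ``follow the alternating edges'' argument can revisit vertices. The plan is to argue by minimality: take $A_k$ with $|E(A_k)| < \nu_k(G)$ and suppose toward a contradiction that $G$ has no $A_k$-augmenting path; then I would show that along any alternating trail the absence of a deficient endpoint forces the trail to close up, and closing up in a bipartite graph either produces an even alternating cycle (which can be swapped away without changing the edge count, contradicting minimality of the discrepancy) or an odd cycle (impossible by bipartiteness). Careful bookkeeping of which edges are ``used up'' lets me iteratively remove alternating even cycles from $D$ until only augmenting-path-type components remain; since $|E(B_k)| > |E(A_k)|$ and even cycles preserve the $A_k$-versus-$B_k$ edge balance, at least one augmenting path must survive, completing the proof.
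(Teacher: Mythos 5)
You correctly identify the crux of the problem: in the symmetric difference $D = E(A_k)\,\triangle\,E(B_k)$ a vertex can have degree up to $2k$, so the components of $D$ are not alternating paths and cycles. But the mechanism you propose to get around this fails, because two of your key sub-claims are false. First, the claim that ``along any alternating trail the absence of a deficient endpoint forces the trail to close up'' is wrong: a maximal alternating trail can simply dead-end, after an even number of edges, at a non-deficient vertex, without ever revisiting a vertex. Second, the claim that removing alternating even cycles leaves ``only augmenting-path-type components'' is also wrong: even when $D$ is a forest (so there are no cycles to remove at all), its components need not be alternating paths, because two edges of $E(B_k)\setminus E(A_k)$ can meet at a vertex. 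A concrete example with $k=2$: let $G$ be the bipartite tree with edges $wx_1, wx_2, x_1y, x_1v, yz$, and take $A_2=\{x_1y, yz, x_1v\}$, $B_2=\{wx_1, wx_2, yz, x_1v\}$. Since $d_G(x_1)=3$, every $2$-edge-colorable subgraph omits an edge at $x_1$, so $\nu_2(G)=4=|E(B_2)|$ and $B_2$ is maximum, while $|E(A_2)|=3<\nu_2(G)$. Here $D=\{wx_1,wx_2,x_1y\}$ is acyclic yet not an alternating path (two $B_2$-edges meet at $w$), and the maximal alternating trail from the deficient vertex $w$ that goes $w\to x_1\to y$ gets stuck at $y$, which has $d_{A_2}(y)=2=k$ and so is not deficient: no cycle, no deficient endpoint, even length. (An augmenting path, namely $wx_2$, does exist, but greedy trail-following misses it.) A further gap: you invoke ``minimality of the discrepancy,'' but you never fix $B_k$ by any extremal choice, so the cycle-swap has nothing to contradict.

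Both defects are repairable, but by different means than you sketch, and it is worth seeing the two options. The paper's route avoids trails entirely: it chooses, among all maximum $k$-edge-colorable subgraphs, one $H_k$ maximizing $|E(A_k)\cap E(H_k)|$; this extremal choice kills every alternating even cycle by a swap, and then a \emph{longest} simple alternating path is shown, by further swaps together with bipartiteness and its own maximality, to be odd with both endpoints $A_k$-deficient, i.e.\ augmenting. Alternatively, your counting approach can be completed by a global decomposition instead of a single greedy trail: at each vertex of $D$, pair edges of $E(A_k)\setminus E(B_k)$ with edges of $E(B_k)\setminus E(A_k)$ as far as possible; the pairings decompose $E(D)$ into edge-disjoint alternating trails whose ends are unpaired edges. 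Bipartiteness guarantees that every closed sub-trail has even length, which is exactly what is needed both to excise repeated vertices while preserving alternation and to conclude that closed trails are balanced between the two sides; since $|E(B_k)|-|E(A_k)|\geq 1$, some open trail must carry one more $B_k$-edge than $A_k$-edges, hence is odd and ends on unpaired $B_k$-edges, and an unpaired $B_k$-edge at $v$ forces $d_{A_k}(v)<d_{B_k}(v)\leq k$. Excising closed segments then yields a simple $A_k$-augmenting path. As written, however, your proposal does not prove the lemma: the step it defers is precisely where the lemma's content lies, and the argument offered for it is refuted by the example above.
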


\begin{proof} For the $k$-edge-colorable subgraph $A_k$ consider all maximum $k$-edge-colorable subgraphs $H_k$ and choose one maximizing $|E(A_k)\cap E(H_k)|$. By an alternating component, we will mean a path or an even cycle of $G$ whose edges belong to $E(A_k)\backslash E(H_k)$ and $E(H_k)\backslash E(A_k)$, alternatively. Observe that any alternating component is either an even cycle or an even path or an odd path. Moreover, since $|E(A_k)|<\nu_k(G)$, there is at least one edge in $E(H_k)\backslash E(A_k)$, hence $G$ contains at least one alternating component.

We claim that $G$ contains no alternating component $C$ that is an even cycle. On the opposite assumption, consider a subgraph $H'_k$ of $G$ obtained from $H_k$ by exchanging the edges on $C$. Observe that the degree of any vertex of $G$ is the same as it was in $H_k$. Hence $H'_k$ is $k$-edge-colorable by Proposition \ref{prop:Deltakprop}. Moreover, $|E(H'_k)|=|E(H_k)|=\nu_k(G)$, hence $H'_k$ is maximum $k$-edge-colorable. However $|E(A_k)\cap E(H'_k)|>|E(A_k)\cap E(H_k)|$, which contradicts our choice of $H_k$.

Now, consider all alternating components $C$ of $G$ and among them choose one maximizing $|E(C)|$. From the previous paragraph we have that $C$ is a path. Let us show that $C$ is an odd path. Assume that $C$ is an even path connecting vertices $u$ and $v$. Assume that $u$ is incident to an edge of $E(H_k)\backslash E(A_k)$ and $v$ is incident to $E(A_k)\backslash E(H_k)$ on $C$. Let us show that $d_{H_k}(v)\leq k-1$. If $d_{H_k}(v)= k$, then $v$ is incident to an edge $e=vw \in E(H_k)\backslash E(A_k)$. Observe that $w\notin V(C)$. If $w\in V(C)$, then either we have an alternating component that is a cycle, or we have an odd cycle. Both of the cases are contradictory. Thus $w\notin V(C)$. Now observe that $C\cup\{e\}$ forms an alternating component with more edges than $C$. This contradicts our choice of $C$. 

Thus $d_{H_k}(v)\leq k-1$. Consider a subgraph $H'_k$ of $G$ by exchanging the edges on $C$. Observe that the degree of any vertex of $G$ is the same as it was in $H_k$ except $v$ which has degree at most $k$ and $u$ whose degree has decreased by one. Hence $H'_k$ is $k$-edge-colorable by Proposition \ref{prop:Deltakprop}. Moreover, $|E(H'_k)|=|E(H_k)|=\nu_k(G)$, hence $H'_k$ is maximum $k$-edge-colorable. However $|E(A_k)\cap E(H'_k)|>|E(A_k)\cap E(H_k)|$, which contradicts our choice of $H_k$.

Thus $C$ is an odd path. Again let the end-points of $C$ be $u$ and $v$. If $u$ and $v$ are incident to edges $E(A_k)\backslash E(H_k)$ on $C$, then similarly to previous paragraph, one can show that $d_{H_k}(u)\leq k-1$ and $d_{H_k}(v)\leq k-1$. If we exchange the edges of $H_k$ on $C$ we would find a larger $k$-edge-colorable subgraph, contradicting the maximality of $H_k$.

Thus, $u$ and $v$ are incident to edges $E(H_k)\backslash E(A_k)$ on $C$. Similarly to previous paragraph, one can show that $d_{A_k}(u)\leq k-1$ and $d_{A_k}(v)\leq k-1$. Now, it is not hard to see that $C$ is an $A_k$-augmenting path. The proof of the lemma is complete.
\end{proof}

\bigskip

When $G$ is not bipartite, $G$ may possess an augmenting path with respect to a maximum $k$-edge-colorable subgraph. Consider the graph from Figure \ref{fig:AugNonBipExample}, and let $A_2$ be the subgraph colored with $\alpha$ and $\beta$. It is easy to see that $A_2$ is maximum $2$-edge-colorable in $G$, however $G$ contains an $A_2$-augmenting path.

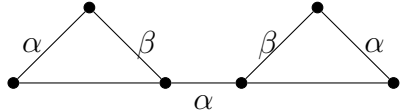
\begin{figure}[htbp]
 \begin{center}
  \begin{tikzpicture}
  
  \node at (0.25,0.5) {$\alpha$};
  \node at (1.75,0.5) {$\beta$};
  \node at (2.5,-0.25) {$\alpha$};
  \node at (3.35,0.5) {$\beta$};
  \node at (4.75,0.5) {$\alpha$};
  
  \tikzstyle{every node}=[circle, draw, fill=black!50,
                        inner sep=0pt, minimum width=4pt]
                        
    \node[circle,fill=black,draw] at (0,0) (n00) {};
    \node[circle,fill=black,draw] at (1,1) (n11) {};
    \node[circle,fill=black,draw] at (2,0) (n20) {};
    \node[circle,fill=black,draw] at (3,0) (n30) {};
    \node[circle,fill=black,draw] at (4,1) (n41) {};
    \node[circle,fill=black,draw] at (5,0) (n50) {};
    
    \path[every node]
            
            (n00) edge  (n11)
                  edge (n20)
                  
            (n11) edge (n20)
            
            (n20) edge (n30)
            
            (n30) edge (n41)
                  edge (n50)
                  
            (n41) edge (n50);
   \end{tikzpicture}
 \end{center}
\caption{The statement of Lemma \ref{lem:BergeAugmPathLem} is not true when $G$ is not bipartite.}
\label{fig:AugNonBipExample}       
\end{figure}

\section{The main results}
\label{sec:main}

In this section, we obtain the main result of the paper. Our first theorem proves a lower bound for $\nu_k(G)$ in terms of the average of $\nu_{k-1}(G)$ and $\nu_{k+1}(G)$.

\begin{theorem}\label{thm:Bipk=k} For any bipartite graph $G$ and $k\geq 1$
\[	\nu_k(G)\geq \frac{\nu_{k-1}(G) + \nu_{k+1}(G)}{2}.\]
\end{theorem}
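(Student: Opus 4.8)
The plan is to exploit Proposition \ref{prop:Deltakprop}, which turns the statement into a purely degree-constrained one: a subgraph of a bipartite graph is $k$-edge-colorable precisely when its maximum degree is at most $k$, so $\nu_k(G)$ is just the maximum number of edges in a subgraph $H\subseteq G$ with $\Delta(H)\le k$. The desired inequality $2\nu_k(G)\ge \nu_{k-1}(G)+\nu_{k+1}(G)$ is then a discrete concavity statement, and I would prove it by constructing, from maximum $(k-1)$- and $(k+1)$-edge-colorable subgraphs, two $k$-edge-colorable subgraphs whose edge counts add up to exactly $\nu_{k-1}(G)+\nu_{k+1}(G)$.

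Concretely, let $A=H_{k-1}$ and $B=H_{k+1}$ be maximum subgraphs, so $\Delta(A)\le k-1$ and $\Delta(B)\le k+1$ by Proposition \ref{prop:Deltakprop}. Form the bipartite multigraph $J=A\uplus B$, the multiset union, in which each edge lying in both $A$ and $B$ appears with multiplicity two; note $d_J(v)=d_A(v)+d_B(v)\le 2k$ for every vertex $v$. The goal is to split $E(J)$ into two parts $J_1,J_2$ such that (i) $d_{J_i}(v)\le k$ at every vertex, and (ii) the two copies of every doubled edge are separated, one going to $J_1$ and the other to $J_2$. Condition (ii) guarantees that each $J_i$, viewed back inside $G$, is an honest subgraph (containing each edge of $G$ at most once), hence by Proposition \ref{prop:Deltakprop} is $k$-edge-colorable, so $|E(J_i)|\le \nu_k(G)$; and since the split keeps every edge of $J$, we get $|E(J_1)|+|E(J_2)|=|E(J)|=|E(A)|+|E(B)|=\nu_{k-1}(G)+\nu_{k+1}(G)$, which yields the theorem after adding the two bounds $|E(J_i)|\le\nu_k(G)$.

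To build this split I would first force each doubled edge to contribute one copy to $J_1$ and one to $J_2$; writing $s(v)$ for the number of such doubled edges at $v$, this uses up degree $s(v)$ of each part at $v$. It then remains to $2$-color the remaining \emph{single} edges, those in the symmetric difference $E(A)\triangle E(B)$, which form a bipartite multigraph $G'$ with $d_{G'}(v)=d_A(v)+d_B(v)-2s(v)\le 2(k-s(v))$, so that each color meets $v$ at most $k-s(v)$ times. This is exactly a balanced edge $2$-coloring: by the standard Eulerian-circuit argument (pair up the odd-degree vertices with dummy edges, take Eulerian circuits in each component, color the edges alternately, then delete the dummies) one obtains a $2$-coloring in which each color meets every vertex at most $\lceil d_{G'}(v)/2\rceil$ times, and $\lceil d_{G'}(v)/2\rceil\le k-s(v)$ by the degree bound. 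Adding back the forced doubled copies gives $d_{J_i}(v)\le s(v)+(k-s(v))=k$, as required.

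The main obstacle is this last step, producing the balanced $2$-coloring of the single edges that simultaneously respects the per-vertex budget left over after the doubled edges are separated. Everything else is bookkeeping with degrees, but here one must invoke a genuine combinatorial fact (the existence of an edge $2$-coloring with each class of degree at most $\lceil d(v)/2\rceil$) and check that the forced separation of the doubled edges and the balanced coloring of the rest fit together so that no vertex exceeds degree $k$ and no edge of $G$ is used twice. I note that the augmenting-path machinery of Lemma \ref{lem:BergeAugmPathLem} does not seem strictly necessary for this route, though it presumably underlies the authors' own argument.
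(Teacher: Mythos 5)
Your construction is sound in outline and is genuinely different from the paper's proof. The paper argues by minimal counterexample: it shows that a smallest counterexample would satisfy $E(H_{k-1})\cup E(H_{k+1})=E(G)$, $\nu_k(G)=\nu_k(G-e)$ for every edge $e$, and $\delta(G)\le k$ (the last via a Vizing-type fan argument), and then reaches a contradiction by using the Berge-type augmenting-path lemma (Lemma \ref{lem:BergeAugmPathLem}) to force $\nu_k(G)=\nu_k(G-v)+k$ for a suitable vertex $v$. You instead prove the inequality directly: take maximum subgraphs $A=H_{k-1}$, $B=H_{k+1}$, form the multiset union $J$ with $d_J(v)\le 2k$, give each part one copy of every common edge, and split the symmetric difference by a balanced $2$-edge-coloring; the two resulting graphs are honest subgraphs of $G$ of maximum degree at most $k$, and their sizes sum to $\nu_{k-1}(G)+\nu_{k+1}(G)$, so Proposition \ref{prop:Deltakprop} finishes the proof. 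Your route is much shorter, avoids Lemmas \ref{lem:RemoveG-eHkdegrees} and \ref{lem:BergeAugmPathLem} entirely, and in fact yields Corollary \ref{cor:ArithMeanCorollary} in one stroke: replacing $k\mp 1$ by $k\mp i$ changes nothing in the degree bookkeeping, since still $d_A(v)+d_B(v)\le 2k$, so the induction on $i$ becomes unnecessary. What the paper's longer route buys is auxiliary machinery of independent interest (notably the augmenting-path characterization for bipartite graphs).

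One step of your write-up needs repair: the balanced $2$-coloring lemma as you sketch it. The bound ``each color meets $v$ at most $\lceil d_{G'}(v)/2\rceil$ times'' is \emph{false} for general multigraphs --- in a triangle, any $2$-coloring places two edges of some color at a common vertex --- so bipartiteness must enter the Eulerian argument, yet your sketch never uses it. Concretely, if you pair up the odd-degree vertices with dummy edges arbitrarily, the augmented graph may have a component with an odd number of edges; its Eulerian circuit then has odd length, the alternating coloring gives the first and last edges the same color, and the starting vertex receives $d/2+1$ edges of one color. This breaks your budget precisely when $d_A(v)+d_B(v)=2k$, giving $d_{J_i}(v)=k+1$. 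The standard fix: add the dummies so that the augmented graph stays bipartite, e.g. one new vertex on each side of the bipartition, each joined to the odd-degree vertices of the \emph{other} side, plus an edge between the two new vertices if parity requires it. Then every component of the augmented graph is bipartite with all degrees even, hence has an even number of edges (the edge count equals the sum of the degrees over one side), so every Eulerian circuit has even length and the alternating coloring is exactly balanced at every vertex; deleting the dummy edges only lowers the counts. With this patch (or by simply invoking de Werra's theorem on equitable edge-colorings of bipartite graphs) your proof is complete.
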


\begin{proof} Assume that the statement of the theorem is wrong. Let $G$ be a counter-example minimizing $|V(G)|+|E(G)|$. We prove a series of claims that establish various properties of $G$.

\begin{claim}\label{claim:simple} $G$ is connected and $|V(G)|\geq 2$.
\end{claim}

\begin{proof} If $G$ is the graph with one vertex, then clearly it is bipartite and $\nu_i(G)=0$ for any $i\geq 0$, hence it is not a counter-example to our theorem. Thus, $|V(G)|\geq 2$. Let us show that $G$ is connected. Assume that $G$ contains $t\geq 2$ components, which are $G^{(1)},...,G^{(t)}$. We have that for $i\geq 0$
\[\nu_i(G)=\nu_i(G^{(1)})+...+\nu_i(G^{(t)}),\]
hence
\begin{align*}
\nu_k(G)&=\nu_k(G^{(1)})+...+\nu_k(G^{(t)})\geq \frac{\nu_{k-1}(G^{(1)}) + \nu_{k+1}(G^{(1)})}{2}+...+\frac{\nu_{k-1}(G^{(t)}) + \nu_{k+1}(G^{(t)})}{2}\\
&=\frac{\nu_{k-1}(G) + \nu_{k+1}(G)}{2}.
\end{align*} Thus, $G$ is not a counter-example to our statement contradicting our assumption. Here we used the fact that $G^{(1)},...,G^{(t)}$ are smaller than $G$, hence they are not counter-examples to our theorem.
The proof of the claim is complete.
\end{proof}

\begin{claim}
\label{cl:Hk-1Hk+1} For any maximum $(k-1)$-edge-colorable subgraph $H_{k-1}$ and any maximum $(k+1)$-edge-colorable subgraph $H_{k+1}$, we have 
\[E(H_{k-1})\cup E(H_{k+1})=E(G).\]
\end{claim}

\begin{proof} If $ E(H_{k-1})\cup E(H_{k+1}) \neq E(G)$ for some $H_{k-1}$ and $H_{k+1}$, then there exist an edge $e$, such that $e$ lies outside $H_{k-1}$ and $H_{k+1}$. Hence
\[\nu_{k-1}(G-e)=\nu_{k-1}(G)\] and \[\nu_{k+1}(G-e)=\nu_{k+1}(G),\]
therefore we get:
\[\nu_k(G)\geq \nu_k(G-e)\geq \frac{\nu_{k-1}(G-e) + \nu_{k+1}(G-e)}{2}=\frac{\nu_{k-1}(G) + \nu_{k+1}(G)}{2}.\]
Here we used the fact that the bipartite graph $G-e$ is not a counter-example.
\end{proof}

Our next claim states that removing an edge from $G$ does not decrease the size of $\nu_k(G)$.

\begin{claim}
\label{cl:RemoveG-e} For any edge $e$ of $G$, we have $\nu_k(G)=\nu_k(G-e)$.
\end{claim}

\begin{proof} If $\nu_k(G)=1+\nu_k(G-e)$ (Proposition \ref{prop:edgeremoval}), then 
\begin{align*}
    \nu_{k}(G) &= 1+\nu_k(G-e) \geq 1+  \frac{\nu_{k-1}(G-e)  + \nu_{k+1}(G-e) }{2}\\
               & =  \frac{\nu_{k-1}(G-e)+1  + \nu_{k+1}(G-e)+1 }{2}\geq  \frac{\nu_{k-1}(G) + \nu_{k+1}(G)}{2}.
\end{align*} Here we used the fact that $G-e$ is not a counter-example and Proposition \ref{prop:edgeremoval} twice.
\end{proof}

Our final claim establishes some relations for maximum and minimum degrees of $G$. Its proof makes use of the fan-argument by Vizing \cite{stiebitz:2012,vizing:1964}.

\begin{claim}
\label{cl:mimmaxdegree} $\Delta(G)\leq 2k$ and $\delta(G)\leq k$.
\end{claim}

\begin{proof} Let $H_{k-1}$ and $H_{k+1}$ be a maximum $(k-1)$-edge-colorable and a maximum $(k+1)$-edge-colorable subgraphs of $G$, respectively. By Claim \ref{cl:Hk-1Hk+1} $G$ is a union of $H_{k-1}$ and $H_{k+1}$, hence it is a union of $2k$ matchings. Thus $\Delta(G)\leq 2k$. 

Let us show that $\delta(G)\leq k$. Assume that $\delta(G)\geq k+1$. If $\Delta(G)\leq k+1$, then $G$ is $(k+1)$-regular, hence from Proposition \ref{prop:regbip} we have  $\nu_{i}(G)=i\cdot \frac{|V|}{2}$ for $i=k-1,k,k+1$. Therefore
\[\nu_k(G)= \frac{\nu_{k-1}(G) + \nu_{k+1}(G)}{2}.\]
Thus, $G$ is not a counter-example. Hence, we can assume that $\Delta(G)\geq k+2$, and therefore $E(H_{k-1})\backslash E(H_{k+1})\neq \emptyset$. Let $e=uv$ be an edge from this set. Then $u$ or $v$ must be incident to all $(k+1)$ colors of $H_{k+1}$ (apply Lemma \ref{lem:RemoveG-eHkdegrees} with $j=k+1$). Assume that this vertex is $v$. Let us show that $u$ is incident to all $(k+1)$ colors of $H_{k+1}$ as well.

On the opposite assumption, assume that $u$ misses a color $\beta$ of $H_{k+1}$. Then $v$ must be incident to an edge $e_w=vw$ of color $\beta$ in $H_{k+1}$, as $d_{H_{k+1}}(v)= k+1$. Since $d_{H_{k-1}}(v)\leq k-1$ and $d_{H_{k+1}}(v)= k+1$, there is an edge $e_z=vz$ incident to $v$ such that $e_z\in E(H_{k+1})\backslash E(H_{k-1})$. Let the color of $e_z$ in $H_{k+1}$ be $\alpha$.

If $\alpha$ is missing at $u$, then consider a subgraph $H'_{k+1}$ of $G$ obtained from $H_{k+1}$ by removing the edge $e_z$, adding $e$ to $H'_{k+1}$ and coloring $e$ with $\alpha$. Observe that $H'_{k+1}$ is $(k+1)$-edge-colorable, $|E(H'_{k+1})|=|E(H_{k+1})|=\nu_{k+1}(G)$. Hence $H'_{k+1}$ is maximum $(k+1)$-edge-colorable. However, $e_z\notin E(H_{k-1})\cup E(H'_{k+1})$ violating Claim \ref{cl:Hk-1Hk+1}.

Thus, we can assume that $\alpha$ is present at $u$, hence it is different from $\beta$. Consider the $\alpha-\beta$ alternating path $P_u$ of $H_{k+1}$ starting from $u$. We claim that $P_u$ passes through $v$. If not, we could have exchanged the colors on $P_u$, remove $e_z$ from $H_{k+1}$, add $e$ to $H_{k+1}$, color it with $\alpha$ and get a new maximum $(k+1)$-edge-colorable subgraph violating Claim \ref{cl:Hk-1Hk+1}. Thus, $P_u$ passes through $v$. We claim that it passes first via $z$, then via $v$ and $w$. If $P_u$ first passes via $w$, then together with $e$ we get an odd cycle contradicting our assumption. 

Let $P_w$ be the final part of $P_u$ that starts from $w$. Consider a $(k+1)$-edge-colorable subgraph $H'_{k+1}$ of $G$ obtained from $H_{k+1}$ as follows: exchange the colors on $P_w$, color $e$ with $\beta$, color $e_w$ with $\alpha$ and remove $e_z$ from $H_{k+1}$. Observe that $H'_{k+1}$ is $(k+1)$-edge-colorable, $|E(H'_{k+1})|=|E(H_{k+1})|=\nu_{k+1}(G)$. Hence $H'_{k+1}$ is maximum $(k+1)$-edge-colorable. However, $e_z\notin E(H_{k-1}) \cup E(H'_{k+1})$ violating Claim \ref{cl:Hk-1Hk+1}.

Thus $u$ and $v$ must be incident to all $(k+1)$ colors of $H_{k+1}$, in particular, $d(u)\geq k+2$ and $d(v)\geq k+2$. Observe that by Claim \ref{cl:Hk-1Hk+1}, any vertex of degree at least $k+2$ must be incident to an edge from $E(H_{k-1})\backslash E(H_{k+1})$. Consider the bipartite graph $J=G-(E(H_{k-1})\backslash E(H_{k+1}))$. Observe that $J$ is a $(k+1)$-regular bipartite graph with $V(J)=V(G)$. Hence from Proposition \ref{prop:regbip}, we have $\nu_{i}(G)=i\cdot \frac{|V|}{2}$ for $i=k-1,k,k+1$, and therefore
\[\nu_k(G)= \frac{\nu_{k-1}(G) + \nu_{k+1}(G)}{2}.\]
This means that $G$ is not a counter-example to our statement contradicting our assumption. Hence $\delta(G)\leq k$. The proof of the claim is complete.
\end{proof}

We are ready to prove the theorem. By Claim \ref{cl:mimmaxdegree}, $\delta(G)\leq k$, hence there is a vertex $u$ with $d_G(u)\leq k$. On the other hand, by Claim \ref{claim:simple} $G$ is connected and $|V|\geq 2$, hence $d_G(u)\geq 1$. Thus, there is an edge $e=uv$ incident to $u$. By Claim \ref{cl:RemoveG-e}, there is a maximum $k$-edge-colorable subgraph $H_k$ that does not contain $e$. By Lemma \ref{lem:RemoveG-eHkdegrees}, $d_{H_k}(u)=k$ or $d_{H_k}(v)=k$ for any such $H_k$. Since $d_{G-e}(u)\leq k-1$, we have $d_{H_k}(v)=k$ for any maximum $k$-edge-colorable subgraph $H_k$ that does not contain $e$.

By Proposition \ref{prop:vertexremoval}, we have $\nu_k(G)\leq \nu_k(G-v)+k$. Let us show that $\nu_k(G)=\nu_k(G-v)+k$. Assume that $\nu_k(G)\leq \nu_k(G-v)+k-1$. Since $\nu_k(G)=\nu_k(G-e)$ (Claim \ref{cl:RemoveG-e}) and $G-e-v=G-v$, we have $\nu_k(G-e)\leq \nu_k(G-e-v)+k-1$.

Choose a maximum $k$-edge-colorable subgraph $H^{(0)}$ of $G-e-v$. If $H^{(0)}$ is maximum in $G-e$, then since $e$ does not lie in $H^{(0)}$, we have a contradiction with $d_{H^{(0)}}(v)=k$ as $d_{H^{(0)}}(v)=0$. Thus $H^{(0)}$ is not maximum in $G-e$. By Lemma \ref{lem:BergeAugmPathLem}, there a $k$-edge-colorable subgraph $H^{(1)}$ which is obtained from $H^{(0)}$ by shifting the edges on an $H^{(0)}$-augmenting path in $G-e$. Observe that $d_{H^{(1)}}(v)\leq 1$. If $H^{(1)}$ is maximum in $G-e$, then we have a contradiction with $d_{H^{(1)}}(v)=k$ as $d_{H^{(1)}}(v)\leq 1$. Thus $H^{(1)}$ is not maximum in $G-e$. By repeating the argument and applying Lemma \ref{lem:BergeAugmPathLem} at most $(k-1)$ times, we will obtain a maximum $k$-edge-colorable subgraph $H^{(i)}$ of $G-e$ with $d_{H^{(i)}}(v)\leq k-1$ contradicting the fact that $d_{H_k}(v)=k$ for any maximum $k$-edge-colorable subgraph $H_k$ of $G$ that does not contain $e$.

Thus, $\nu_k(G)=\nu_k(G-v)+k$. We have
\begin{align*}
    \nu_{k}(G) &= k+\nu_k(G-v) \geq k+  \frac{\nu_{k-1}(G-v)  + \nu_{k+1}(G-v) }{2}\\
               & =  \frac{\nu_{k-1}(G-v)+k-1  + \nu_{k+1}(G-v)+k+1 }{2}\geq  \frac{\nu_{k-1}(G) + \nu_{k+1}(G)}{2}.
\end{align*} This contradicts the fact that $G$ is a counter-example to our statement. Here we used the fact that $G-v$ is not a counter-example and Proposition \ref{prop:vertexremoval} twice. The proof of the theorem is complete.
\end{proof}

The proved theorem is equivalent to the following

\begin{remark}\label{rem:decrremark} If $G$ is a bipartite graph, then 
\[\nu_1(G)-\nu_0(G)\geq \nu_2(G)-\nu_1(G)\geq \nu_3(G)-\nu_2(G)\geq ... .\]
\end{remark}

Below we derive the main result of the paper as a corollary to the theorem proved above:

\begin{corollary}
\label{cor:ArithMeanCorollary} Let $G$ be a bipartite graph and let $k\geq 0$. Then for $i=0,1,...,k$ we have
\[\nu_k(G)\geq \frac{\nu_{k-i}(G)+\nu_{k+i}(G)}{2}.\]
\end{corollary}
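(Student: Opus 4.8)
The plan is to derive Corollary~\ref{cor:ArithMeanCorollary} from Theorem~\ref{thm:Bipk=k} by induction on $i$, using the reformulation given in Remark~\ref{rem:decrremark}. The statement is trivially true for $i=0$, since both sides equal $\nu_k(G)$, and it reduces exactly to Theorem~\ref{thm:Bipk=k} when $i=1$. So the work is to show that the inequality for a given $i$ follows from the inequalities already established for smaller values of the gap, and the cleanest way to organize this is to exploit the ``concavity'' of the sequence $\nu_0(G),\nu_1(G),\nu_2(G),\dots$ encoded in the Remark.

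The key observation I would use is that Remark~\ref{rem:decrremark} says the successive differences $\nu_{j+1}(G)-\nu_j(G)$ are nonincreasing in $j$; that is, the sequence $(\nu_j(G))_{j\geq 0}$ is concave. For a concave sequence, the midpoint inequality $\nu_k \geq \frac{\nu_{k-i}+\nu_{k+i}}{2}$ is a standard consequence. Concretely, I would write
\[
\nu_k(G)-\nu_{k-i}(G) = \sum_{j=k-i}^{k-1}\bigl(\nu_{j+1}(G)-\nu_j(G)\bigr)
\quad\text{and}\quad
\nu_{k+i}(G)-\nu_k(G) = \sum_{j=k}^{k+i-1}\bigl(\nu_{j+1}(G)-\nu_j(G)\bigr),
\]
and then compare these two sums term by term. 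Pairing the $m$-th term of the first sum (with $j=k-i+m-1$) against the $m$-th term of the second sum (with $j=k+m-1$), concavity gives $\nu_{j+1}-\nu_j$ for the smaller index is at least the corresponding difference for the larger index, so each summand in the first sum dominates the matching summand in the second. Hence $\nu_k(G)-\nu_{k-i}(G)\geq \nu_{k+i}(G)-\nu_k(G)$, which rearranges immediately to the claimed bound.

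I expect no serious obstacle here, since the analytic content is entirely front-loaded into Theorem~\ref{thm:Bipk=k}; the only point requiring a little care is making sure the index ranges are legitimate. Because the corollary restricts to $i\leq k$, every index appearing in the sums lies in the range $\{0,1,\dots,k+i\}$ and in particular the differences $\nu_{j+1}(G)-\nu_j(G)$ with $j\geq 0$ are exactly the quantities that Remark~\ref{rem:decrremark} orders. One should also note that Remark~\ref{rem:decrremark} itself is justified by applying Theorem~\ref{thm:Bipk=k} at each value of $k$, so the whole argument rests on a single ingredient. An equivalent and perhaps even shorter route, which I would mention as an alternative, is a direct induction on $i$: assuming $\nu_k\geq\frac{\nu_{k-i}+\nu_{k+i}}{2}$, apply the basic midpoint inequality to the two inner pairs $\{k-i,k-i+2\}$-type steps and telescope, but the term-by-term comparison of differences above is the most transparent way to present it.
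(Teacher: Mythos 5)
Your proof is correct. The key point to note is that, although you frame the argument as an induction on $i$, what you actually give is a \emph{direct} proof: once Remark~\ref{rem:decrremark} (concavity of the sequence $\nu_0(G),\nu_1(G),\dots$) is in hand, the telescoping sums
\[
\nu_k(G)-\nu_{k-i}(G)=\sum_{j=k-i}^{k-1}\bigl(\nu_{j+1}(G)-\nu_j(G)\bigr),
\qquad
\nu_{k+i}(G)-\nu_k(G)=\sum_{j=k}^{k+i-1}\bigl(\nu_{j+1}(G)-\nu_j(G)\bigr),
\]
together with the term-by-term comparison (each difference in the first sum has a smaller index than its partner in the second, hence dominates it) settle every $i$ with $0\leq i\leq k$ at once; no induction or base cases are needed. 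This is genuinely different in structure from the paper's proof, which does run an induction on $i$: from the hypothesis $\nu_k\geq\frac{\nu_{k-i+1}+\nu_{k+i-1}}{2}$ it applies Theorem~\ref{thm:Bipk=k} once to each of the two inner terms, obtaining a four-term average, and then must separately establish $\nu_{k-i+2}(G)+\nu_{k+i-2}(G)\geq\nu_{k-i}(G)+\nu_{k+i}(G)$ --- which it does by adding two consecutive inequalities from the same chain of differences you use. So both arguments ultimately rest on the monotonicity of the difference sequence from Remark~\ref{rem:decrremark}, but yours uses it in one clean sweep (the standard midpoint inequality for concave sequences), whereas the paper's repackages it inside an inductive step with an extra averaging layer. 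Your route is shorter and, arguably, more transparent; the paper's buys nothing extra here beyond illustrating the halving technique. Your attention to the index ranges (all $j\geq k-i\geq 0$, so every difference is covered by the Remark) closes the only point where such an argument could go wrong.
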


\begin{proof} We prove the statement by induction on $i$. When $i=0$, the statement is trivial. When $i=1$, it follows from Theorem \ref{thm:Bipk=k}. We will assume that the statement is true for $i-1$, and prove it for $i$.

By induction hypothesis we have
\[\nu_k(G)\geq \frac{\nu_{k-i+1}(G)+\nu_{k+i-1}(G)}{2}.\]
By applying Theorem \ref{thm:Bipk=k} on $\nu_{k-i+1}(G)$ and $\nu_{k+i-1}(G)$ we have
\[  \nu_{k}(G) \geq\frac{ \nu_{k-i+2}(G) + \nu_{k-i}(G) + \nu_{k+i-2}(G) + \nu_{k+i}(G)}{4} = 
\frac{\nu_{k-i}(G) +  \nu_{k+i}(G)}{4} + \frac{ \nu_{k-i+2}(G) + \nu_{k+i-2}(G) }{4}.\]
So, in order to complete the proof of the corollary, we need to show
\begin{equation}\label{eq:ToBeProved}
\nu_{k-i+2}(G) + \nu_{k+i-2}(G) \geq \nu_{k-i}(G) + \nu_{k+i}(G).
\end{equation}

Using Remark \ref{rem:decrremark}, we have
\[  \nu_{k-i+1}(G) - \nu_{k-i}(G) \geq \nu_{k-i+2}(G) - \nu_{k-i+1}(G)\geq \cdots \geq \nu_{k+i-1}(G) - \nu_{k+i-2}(G) \geq \nu_{k+i}(G) - \nu_{k+i-1}(G).\]
The last inequality implies
\[[\nu_{k-i+1}(G) - \nu_{k-i}(G)] + [\nu_{k-i+2}(G) - \nu_{k-i+1}(G)] \geq [\nu_{k+i-1}(G) - \nu_{k+i-2}(G)] + [\nu_{k+i}(G) - \nu_{k+i-1}(G)],\]
or
\[  \nu_{k-i+2}(G) - \nu_{k-i}(G) \geq  \nu_{k+i}(G) - \nu_{k+i-2}(G),  \]
which is equivalent to (\ref{eq:ToBeProved}). The proof of the corollary is complete.
\end{proof}

\section{Future Work}
\label{sec:future}

For a (not necessarily bipartite) graph $G$, let $b(G)$ be the smallest number of vertices of $G$ whose removal results into a bipartite graph. One can easily see that $b(G)$ coincides with the minimum number of vertices of $G$, such that any odd cycle of $G$ contains a vertex from these vertices. $b(G)$ is a well studied parameter frequently appearing in various papers on Graph theory and Algorithms. It can be easily seen that a graph $G$ is bipartite if and only if $b(G)=0$, and is nearly bipartite if and only if $b(G)\leq 1$.

We suspect that:

\begin{conjecture}\label{conj:b(G)conjecture} Let $G$ be a graph and let $k\geq 0$. Then for $i=0,1,...,k$ we have
\[\nu_k(G)\geq \frac{\nu_{k-i}(G)+\nu_{k+i}(G)-b(G)}{2}.\]
\end{conjecture}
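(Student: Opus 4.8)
The plan is to bypass the augmenting-path machinery entirely---which, as Figure \ref{fig:AugNonBipExample} shows, breaks down outside the bipartite world---and instead reduce the inequality, for all $i$ simultaneously, to a single statement about edge-colorings of a graph of bounded maximum degree. Fix a maximum $(k-i)$-edge-colorable subgraph $H_{k-i}$ and a maximum $(k+i)$-edge-colorable subgraph $H_{k+i}$ of $G$, and let $U$ be the multigraph on $V(G)$ whose edge multiset is the disjoint union $E(H_{k-i})\sqcup E(H_{k+i})$ (an edge of $G$ lying in both subgraphs occurs twice in $U$). Then $|E(U)|=\nu_{k-i}(G)+\nu_{k+i}(G)$, every vertex satisfies $d_U(v)\le (k-i)+(k+i)=2k$, and since every odd cycle of $U$ projects to an odd cycle of $G$ we have $b(U)\le b(G)$. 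Note that this reduction treats all $i$ at once and needs no analogue of the telescoping used in Corollary \ref{cor:ArithMeanCorollary}.

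Everything then rests on the following \emph{Key Lemma}: if $U$ is a multigraph with $\Delta(U)\le 2k$, then $\nu_{2k}(U)\ge |E(U)|-b(U)$; that is, a maximum $2k$-edge-colorable subgraph of $U$ misses at most $b(U)$ edges. Granting this, fix a proper $2k$-edge-coloring of such a maximum subgraph and split its $2k$ colors into two palettes of size $k$; the palettes determine $k$-edge-colorable subgraphs $F_1,F_2$ of $G$ with $|E(F_1)|+|E(F_2)|\ge |E(U)|-b(U)\ge \nu_{k-i}(G)+\nu_{k+i}(G)-b(G)$, where a doubled edge of $U$ contributes one copy to each $F_j$. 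Taking the larger of the two gives $\nu_k(G)\ge\max\{|E(F_1)|,|E(F_2)|\}\ge\frac{\nu_{k-i}(G)+\nu_{k+i}(G)-b(G)}{2}$, which is exactly the conjectured bound. The case $b(G)=0$ recovers K\"{o}nig's theorem applied to the bipartite $U$, and hence reproves Corollary \ref{cor:ArithMeanCorollary}.

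To prove the Key Lemma, let $S$ be a set of $b(U)$ vertices with $U-S$ bipartite. Since $\Delta(U-S)\le 2k$, K\"{o}nig's theorem gives a proper $2k$-edge-coloring of $U-S$. I would then reinsert the vertices of $S$ one at a time and show that each reinsertion can be accommodated by recoloring so that at most one further edge is left uncolored; summing over $S$ leaves at most $b(U)$ edges uncolored, which is the assertion. For a single reinserted vertex $s$ with $d_U(s)\le 2k$, the edges at $s$ are colored one by one, and when an edge $sx$ cannot immediately receive a color common to $s$ and $x$ one applies a Vizing fan at $s$ followed by a Kempe $\alpha$--$\beta$ chain, exactly as in Claim \ref{cl:mimmaxdegree}. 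The only situation in which no color in $\{1,\dots,2k\}$ can be produced is when the relevant chain closes up into an odd cycle through $s$; in that event one leaves that single edge uncolored, paying the one edge charged to $s$.

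The main obstacle is to prove rigorously that a single reinserted vertex costs at most one uncolored edge, uniformly in $k$: the incremental coloring of the up to $2k$ edges at $s$ must be controlled so that the fan-and-Kempe procedure fails at most once, the failure being certified by an odd cycle that $s$ is forced to hit. For $k=1$ this is transparent---two edges at $s$ compete for the one free color precisely when an odd cycle runs through $s$---but for larger $2k$ the fan analysis is delicate, and it is conceivable that a naive greedy order produces several failures and must be replaced by a global argument, for instance by choosing a coloring that maximizes the number of colored edges at $s$ and analyzing its maximality directly. A secondary technical point is the bookkeeping for doubled edges of $U$ in the palette split, which is harmless because each shared edge can be assigned one copy to each palette before the coloring argument begins. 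I expect the per-vertex charging step to be the crux, and to be exactly where the quantity $b(G)$---rather than the much weaker $2k\cdot b(G)$ that the crude vertex-removal estimate of Proposition \ref{prop:vertexremoval} would yield---genuinely enters.
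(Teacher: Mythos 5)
The statement you set out to prove is Conjecture~\ref{conj:b(G)conjecture}, which the paper explicitly leaves open in Section~\ref{sec:future}; there is no proof in the paper to compare against, so your proposal must stand on its own, and its central step fails. The Key Lemma --- that $\Delta(U)\le 2k$ implies $\nu_{2k}(U)\ge |E(U)|-b(U)$ --- is false for multigraphs, and multigraphs are unavoidable here: the paper allows parallel edges in $G$ itself, and even for simple $G$ your $U$ contains every edge of $E(H_{k-i})\cap E(H_{k+i})$ twice. Take $U$ to be the triangle with every edge doubled and $k=2$: then $\Delta(U)=4=2k$, $|E(U)|=6$, $b(U)=1$, but every matching of $U$ has at most one edge, so $\nu_4(U)=4<5=|E(U)|-b(U)$. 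More generally, the Shannon triangle with each edge of multiplicity $k$ has $\Delta=2k$, $|E|=3k$, $b=1$, yet $\nu_{2k}=2k$, so the deficiency $|E|-\nu_{2k}=k$ exceeds $b=1$ for every $k\ge 2$. The same examples refute the per-vertex charging step you yourself flag as the crux: in the doubled triangle, color $U-s$ (a doubled edge) with colors $1,2$; each of the four edges at the reinserted vertex $s$ can only use colors $3,4$ at its other endpoint, so two of them necessarily remain uncolored --- one bipartizing vertex costs two edges, not one, and in the fat triangle it costs $k$. No ordering of the fan-and-Kempe procedure can evade this, because the bound it is supposed to deliver is simply false.

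Two further points. First, the palette split does not by itself give $|E(F_1)|+|E(F_2)|\ge |E(U)|-b(U)$ as subgraphs of $G$: a doubled edge may receive two colors from the same palette, in which case it contributes one edge to the sum rather than two, and ``assigning one copy to each palette beforehand'' is a constraint on the $2k$-edge-coloring that you never show can be met (a Kempe swap fixing one such edge can spoil another). In the bipartite case $b=0$ this is repairable without any coloring of $U$: pre-place $D=E(H_{k-i})\cap E(H_{k+i})$ on both sides and apply an Euler partition to $U'\setminus D$, where $U'=H_{k-i}\cup H_{k+i}$, obtaining two subgraphs of maximum degree at most $k$ that are $k$-edge-colorable by Proposition~\ref{prop:Deltakprop}; so your claimed re-derivation of Corollary~\ref{cor:ArithMeanCorollary} is salvageable, though not as written. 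Second, restricting the Key Lemma to the $U$'s that actually arise does block the counterexamples above (in the fat triangle $\nu_{k-i}(G)+\nu_{k+i}(G)\le 2k<3k$, so such unions never fill it), but your sketch uses nothing about $U$ beyond $\Delta(U)\le 2k$ and $b(U)\le b(G)$, which the counterexamples show is insufficient; the restricted lemma is at least as strong as Conjecture~\ref{conj:b(G)conjecture} itself, so the reduction makes no progress unless the structure of $E(H_{k-i})\sqcup E(H_{k+i})$ is genuinely exploited --- for instance via a multigraph analogue of Lemma~\ref{lem:BergeAugmPathLem} whose failures are certified by odd cycles and can be charged, globally rather than per vertex, to a set of $b(G)$ vertices.
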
 Observe that when $G$ is bipartite, we get the statement of Corollary \ref{cor:ArithMeanCorollary}. On the other hand, when $G$ is nearly bipartite and $i=1$, we get the statement of the Conjecture \ref{conj:nearlyBip}.

\section*{Acknowledgement}

The second author is indebted to Armen Asratian for useful discussions on $c$-matchings and for disproving an early version of Lemma \ref{lem:BergeAugmPathLem}.




\nocite{*}


\bibliographystyle{abbrv}


\end{document}